\title{Witnessed k-Distance}
\author{Leonidas Guibas}
\email{guibas@cs.stanford.edu}
\address{Department of Computer Science, Stanford University, Stanford, CA}
\author{Quentin Merigot}
\email{quentin.merigot@imag.fr}
\address{Laboratoire Jean Kuntzmann, Université Grenoble I / CNRS}
\author{Dmitriy Morozov}
\email{dmitriy@mrzv.org}
\address{Departments of Computer Science and
  Mathematics, Stanford University, Stanford, CA}
\thanks{This work has been partly supported by ANR grant
  ANR-09-BLAN-0331-01, NSF grants FODAVA 0808515, CCF 1011228, and
  NSF/NIH grant 0900700.}
\begin{document}

\begin{abstract}
  Distance function to a compact set plays a central role in several
  areas of computational geometry. Methods that rely on it are robust to the
  perturbations of the data by the Hausdorff noise, but fail in the presence of
  outliers. The recently introduced \emph{distance to a measure}
  offers a solution by extending the distance
  function framework to reasoning about the geometry of probability measures,
  while maintaining theoretical guarantees
  about the quality of the inferred information.  A combinatorial
  explosion hinders working with distance to a measure as 
  an ordinary (power) distance function. In this paper, we analyze
  an approximation scheme that keeps the representation linear in the
  size of the input, while maintaining the guarantees on the inference
  quality close to those for the exact (but costly) representation.
\end{abstract}

%\keywords{Geometric inference, power distance,
%  computational topology.}

\maketitle

%

%======================================================================
\section{Introduction}

The problem of recovering the geometry and topology of compact sets
from finite point samples has seen several important developments in
the previous decade. Homeomorphic surface reconstruction algorithms
have been proposed to deal with surfaces in $\Rsp^3$ sampled without
noise \cite{amenta1999srv} and with moderate Hausdorff (local) noise
\cite{dey2006provable}. In the case of submanifolds of a higher
dimensional Euclidean space \cite{niyogi2008finding}, or even for more
general compact subsets \cite{chazal2006stc}, it is also possible, at
least in principle, to compute the homotopy type from a Hausdorff
sampling. If one is only interested in the homology of the underlying
space, the theory of persistent homology \cite{persistence-survey}
applied to Rips graphs provides an algorithmically tractable way to
estimate the Betti numbers from a finite Hausdorff sampling
\cite{chazal2008towards}.  % \Remark{Add more targeted references?}
% \Remark{Even for $3$D geometry processing, one needs to take care of
%   outliers.  Application of distance-to-measure:
%   \cite{mullen2010signing}}

All of these constructions share a common feature: they estimate the
geometry of the underlying space by a union of balls of some radius
$r$ centered at the data points $P$. A different way to interpret this
union is as the $r$-sublevel set of the \emph{distance function} to
$P$, $\dist_P: x\mapsto \min_{p\in P} \norm{x - p}$. Distance
functions capture the geometry of their defining sets, and they are
stable to Hausdorff perturbations of those sets, making them
well-suited for reconstruction results.  However, they are also
extremely sensitive to the presence of outliers (i.e.~data points that
lie far from the underlying set); all reconstruction techniques that
rely on them fail even in presence of a single outlier.

To counter this problem, Chazal, Cohen-Steiner, and M\'e\-ri\-got
\cite{distance-to-measure} developed a notion of \emph{distance
  function to a probability measure} that retains the 
properties of the (usual) distance important for geometric 
inference. Instead of assuming an underlying compact set that is
sampled by the points, they assume an underlying probability measure
$\mu$ from which the point sample $P$ is drawn. The distance function
$\dd_{\mu,m_0}$ to the measure $\mu$ depends on a mass parameter $m_0
\in (0,1)$. This parameter acts as a smoothing term: a smaller $m_0$
captures the geometry of the support better, while a larger $m_0$
leads to better stability at the price of precision.  The crucial
feature of the function $\dd_{\mu,m_0}$ is its stability to the
perturbations of the measure $\mu$ under the Wasserstein distance,
defined in Section \ref{sec:wasserstein}. For
instance, the Wasserstein distance between the underlying measure
$\mu$ and the uniform probability measure on the point set $P$
can be small even if $P$ contains some outliers. When this happens,
the stability result ensures that distance function
$\dd_{\UnifMeasure_P, m_0}$ to the uniform probability measure
$\UnifMeasure_P$ on $P$ retains the geometric information
contained in the underlying measure $\mu$ and its support.

\paragraph{Computing with distance functions to measures} 
In this article we address the computational issues related to this new
notion. If $P$ is a subset of $\Rsp^d$ containing $N$ points, and $m_0
= k/N$, we will denote the distance function to the uniform measure on
$P$ by $\dd_{P,k}$. As observed in \cite{distance-to-measure}, the
value of $\dd_{P,k}$ at a given point $x$ is easy to compute: it is
the square root of the average squared distance from
the point $x$ to its $k$ nearest neighbors in $P$. However, most
inference methods require a way to represent the function,
or more precisely its sublevel sets, globally. It turns out that the
distance function $\dd_{P,k}$ can be rewritten as a minimum
\begin{equation}
\label{eq:dk:barycenters}
 \dd^2_{P,k}(x) = \min_{\bar{c}} \norm{x - \bar{c}}^2 - w_{\bar{c}},
\end{equation}
where $\bar{c}$ ranges over the set of barycenters of $k$ points in $P$
(see Section \ref{sec:kdistance}). Computational geometry provides a
rich toolbox to represent sublevel sets of such
functions, for example, via weighted $\alpha$-complexes \cite{alpha}.

The difficulty in applying these methods is that to get an equality in
\eqref{eq:dk:barycenters} the minimum number of barycenters to store is the
same as the number of order-$k$ Voronoi sites of $P$,
making this representation unusable even for modest input sizes. The
solution that we propose is to construct an approximation of the
distance function $\dd_{P,k}$, defined by the same equation as
\eqref{eq:dk:barycenters}, but with $\bar{c}$ ranging over a
smaller subset of barycenters. In this article, we study the quality
of approximation given by a \emph{linear-sized} subset: the
\emph{witnessed barycenters} defined as the barycenters of any $k$
points in $P$ whose order-$k$ Voronoi cell contains at least one of
the sample points. The algorithmic simplicity of the scheme is
appealing: we only have to find the $k-1$ nearest neighbors for each
input point. We denote by $\dw_{P,k}$ and call \emph{witnessed
  $k$-distance} the function defined by
Equation~\eqref{eq:dk:barycenters}, where $\bar{c}$ ranges over
the witnessed barycenters.

\paragraph{Contributions}
Our goal is to give conditions on the point cloud $P$
under which the witnessed $k$-distance $\dw_{P,k}$ provides a good
uniform approximation of the distance to measure $\dd_{P,k}$. We first
give a general multiplicative bound on the error produced by this
approximation. However, most of our paper (Sections
\ref{sec:witnessed-analysis} and \ref{sec:convergence}) analyzes the
uniform approximation error,
when $P$ is a set of independent samples from a measure
concentrated near a lower-dimensional subset of the Euclidean space. The
following is a prototypical example for our setting, although the
analysis we propose allows for a wider range of problems. Note that
some of the common settings in the literature either fit directly into
this example, or in its logic: the mixture of Gaussians
\cite{dasgupta1999learning} and off-manifold Gaussian noise in normal
directions \cite{niyogi2008topological} are two examples.

\begin{itemize}
\item[\Hyp1] We assume that the ``ground truth'' is an unknown
  probability measure $\mu$ whose \emph{dimension is bounded} by a
  constant $\ell \ll d$. Practically, this means that $\mu$ is
  concentrated on a compact set $K \subseteq \Rsp$ whose dimension is
  at most $\ell$, and that its mass distribution shouldn't ``forget''
  any part of $K$ (see Definition~\ref{def:dimension-bound}).  As an
  example $\mu$ could be the uniform measure on a smooth compact
  $\ell$-dimensional submanifold $K$, or on a finite union of such
  submanifolds.
\end{itemize}
This hypothesis ensures that the distance to the measure $\mu$ is
close to the distance to the support $K$ of $\mu$, and lets us recover
information about $K$. Our first result (Witnessed Bound Theorem \ref{th:witnessed-approx}) 
states that if the uniform measure to a point cloud $P$ is a good Wasserstein-approximation of $\mu$,
then the witnessed $k$-distance to $P$ provides a good approximation
of the distance to the underlying compact set $K$.
The bound we obtain is only a constant times worse than the bound for the exact
$k$-distance. 
\begin{itemize}
\item[\Hyp2] The second assumption is that we are not sampling
  directly from $\mu$, but through a noisy channel. We model this by
  considering that our measurements come from a measure $\nu$, which
  is obtained by adding noise to $\mu$. For instance, $\nu$ could be
  the result of the convolution of $\mu$ with a Gaussian distribution
  $\Gauss(0, d^{-1} \sigma^2 \id)$ whose variance is $\sigma^2$. More
  generally, $\nu$ can be any measure such that the
  \emph{Wasserstein distance} from $\mu$ to $\nu$ is at most
  $\sigma$. This generalization allows, in particular, to consider
  noise models that are not translation-invariant.
\item[\Hyp3] Finally, we suppose that our input data set $P \subseteq
  \Rsp^d$ consists of $N$ points drawn independently from the
  noisy measure $\nu$. Denote with $\UnifMeasure_P$ the uniform
  measure on $P$.
\end{itemize}
These two hypothesis allow us to control the Wasserstein distance between $\mu$ and
$\UnifMeasure_P$ with high probability.
We assume that the point cloud $P$ is gathered following the three
hypothesis above. Our second result states that the witnessed
$k$-distance to $P$ provides a good approximation of the distance to
the compact set $K$ with high probability, as soon as the amount of
noise $\sigma$ is low enough and the number of points $N$ is large
enough.

\begin{result}{Approximation Theorem (Theorem \ref{th:main})}
  Let $P$ be a set of $N$ points drawn according to the three
  hypothesis \Hyp1-\Hyp3, let $k \in \{1,\hdots, N\}$ and $m_0 =
  k/N$. Then, the error bound
$$ \norm{\dw_{P,k} - \dd_K}_\infty \leq 54 m_0^{-1/2} \sigma + 24
m_0^{1/\ell}\alpha_\mu^{-1/\ell}$$ holds with probability at least
$$1
- \gamma_\mu \exp(- \beta_\mu N \max(\sigma^{2+2\ell}, \sigma^4) -
\ell\ln(\sigma))$$ where the constants $\beta_\mu$ and $\gamma_\mu$
depend only on $\mu$.
\end{result}

We illustrate the utility of the bound with an example and a topological
inference statement in our final Section \ref{sec:discussion}.

\begin{figure}
 \subfigure[Data]{\includegraphics[width=.48\textwidth]{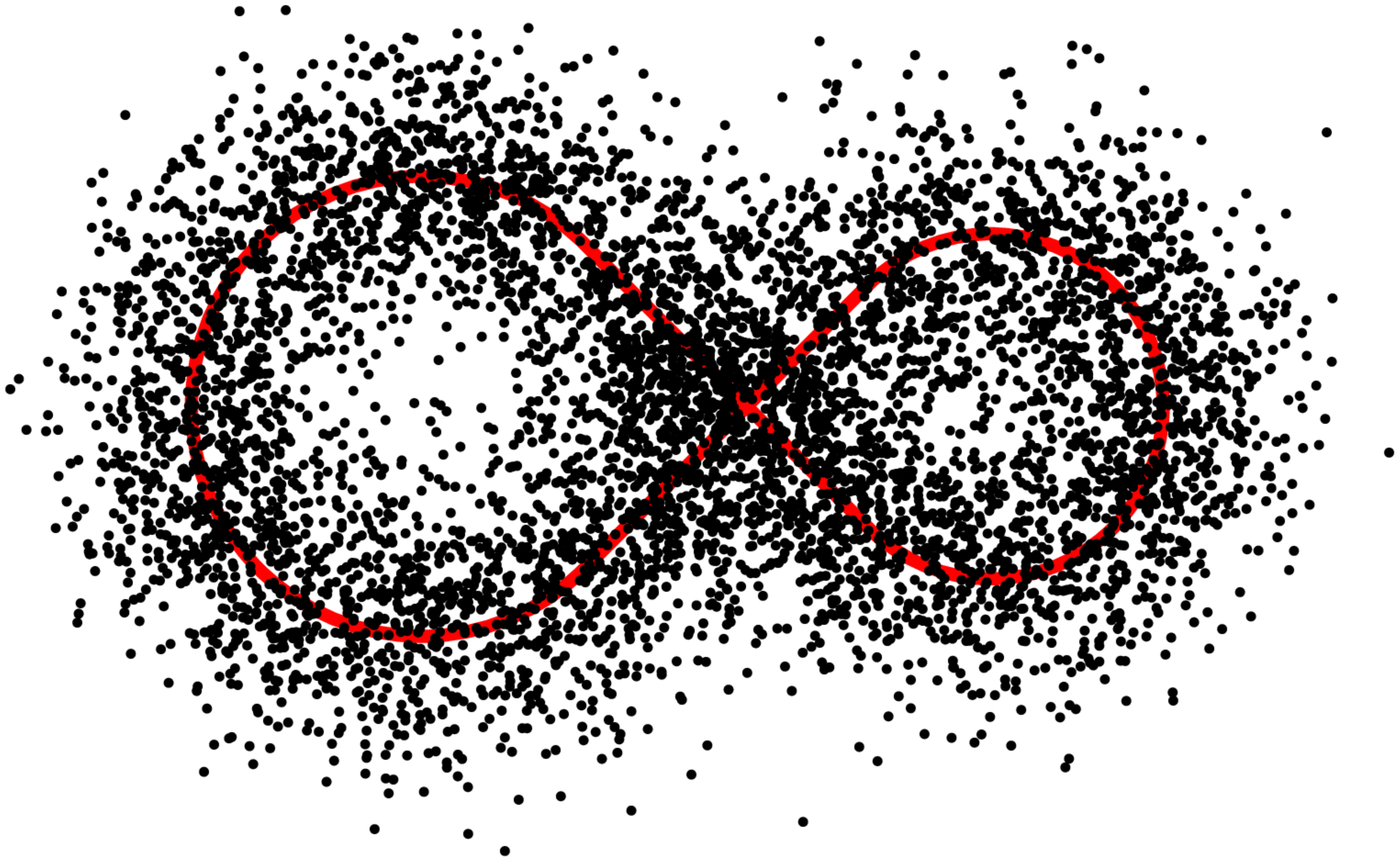} \label{fig:sample}}
 \subfigure[Sublevel sets]{\includegraphics[width=.48\textwidth]{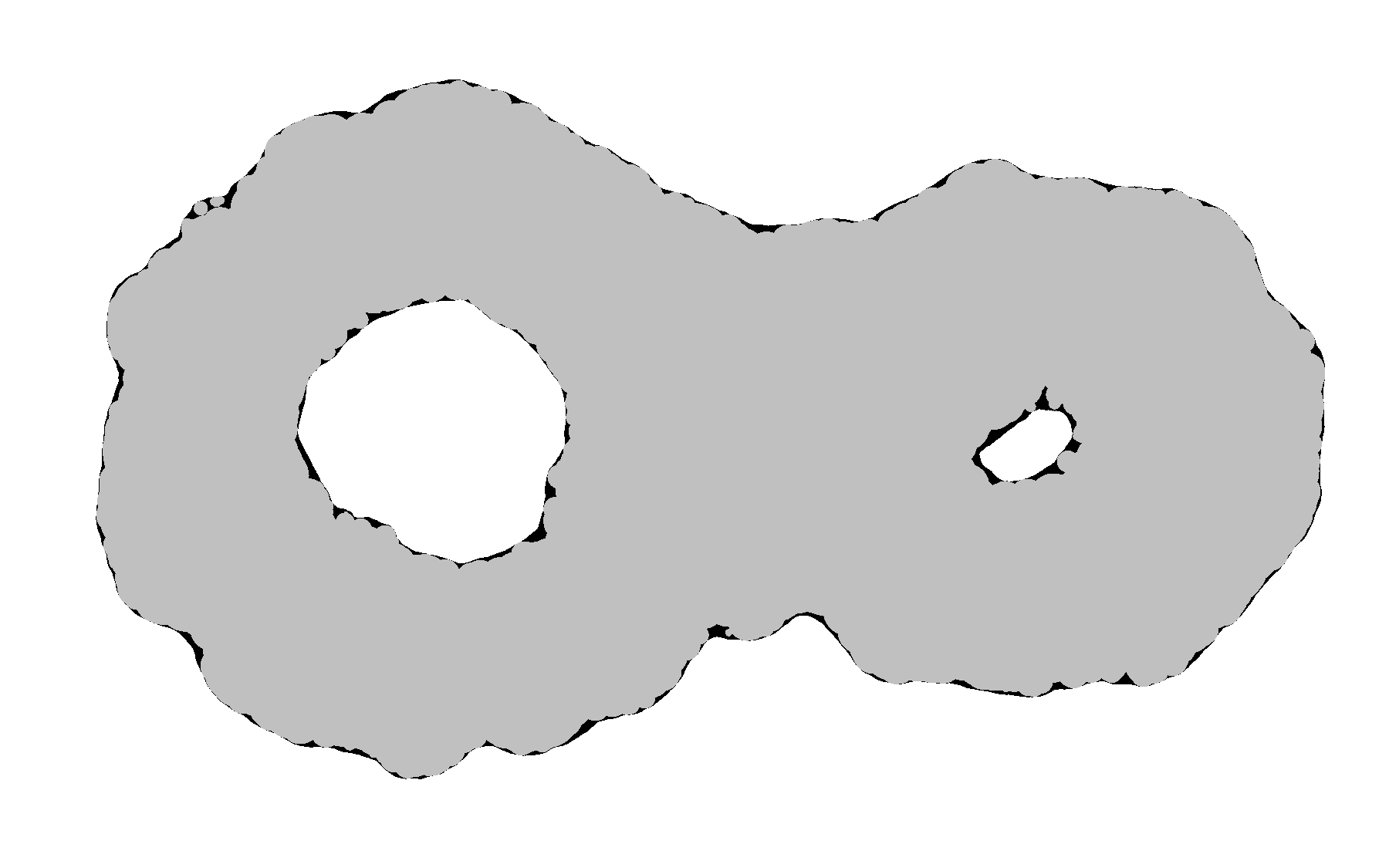} \label{fig:sublevelsets}}

 \caption{(a) 6000 points sampled from a sideways figure 8 (in red),
   with circle radii $R_1 = \sqrt{2}$ and $R_2 = \sqrt{9/8}$. The
   points are sampled from the uniform measure on the figure-8,
   convolved with the Gaussian distribution $\mathcal{N}(0,\sigma^2)$
   where $\sigma = .45$. (b) $r$-sublevel sets of the witnessed (in gray)
   and exact (additional points in black) $k$-distances with mass
   parameter $m_0 = 50/6000$, and $r = .239$.}
 \label{fig:figure-8}   
\end{figure}

\paragraph{Outline} The relevant background appears in Section
\ref{sec:background}. We present our approximation scheme together
with a general bound of its quality in Section \ref{sec:kdistance}. We
analyze its approximation quality for measures concentrated on
low-dimensional subsets of the Euclidean space in Section
\ref{sec:witnessed-analysis}. The convergence of the uniform measure
on a point cloud sampled from a measure of low complexity appears in
Section \ref{sec:convergence} and leads to our main result.

%======================================================================
\section{Background}
\label{sec:background}

We begin by reviewing the relevant background.

\subsection{Measure}
Let us briefly recap the few concepts of measure theory that we use.
A \emph{non-negative measure} $\mu$ on the space $\Rsp^d$ is a mass
distribution. Mathematically, it is defined as a function that maps
every (Borel) subset $B$ of $\Rsp^d$ to a non-negative number
$\mu(B)$, which is \emph{additive} in the sense that
$\mu\left(\cup_{i\in \N} B_i\right) = \sum_i \mu(B_i)$ whenever
$(B_i)$ is a countable family of disjoint (Borel) subsets of
$\Rsp^d$. The \emph{total mass} of a measure $\mu$ is $\mass(\mu) =
\mu(\Rsp^d)$.  A measure $\mu$ is called a \emph{probability measure}
if its total mass is one.  The \emph{support} of a probability measure
$\mu$, denoted by $\Supp(\mu)$ is the smallest closed set whose
complement has zero measure. The \emph{expectation} or \emph{mean} of
$\mu$ is the point $\Exp(\mu) = \int_{\Rsp^d} x \dd\mu(x)$; the
variance of $\mu$ is the number $\sigma_\mu^2 = \int_{\Rsp^d} \norm{x
  - \Exp(\mu)}^2 \dd \mu(x)$.

Although the results we present are often more general, the typical
probability measures we have in mind are of two kinds: (i) the uniform
probability measure defined by the volume form of a lower-dimensional
submanifold of the ambient space and (ii) discrete probability
measures that are obtained through noisy sampling of probability
measures of the previous kind. For any finite set $P$ with $N$ points,
denote by $\UnifMeasure_P$ the uniform measure supported on $P$, i.e. the sum
of Dirac masses centered at $p\in P$ with weight $1/N$.

\subsection{Wasserstein distance}
\label{sec:wasserstein} A natural way to quantify the distance between
two measures is the \emph{Wasserstein distance}.  This distance
measures the $\LL^2$-cost of transporting the mass of the first
measure onto the second one.  A general study of this notion and its
relation to the problem of optimal transport appear in
\cite{villani2003tot}. We first give the general definition and then
explain its interpretation when one of the two measures has finite
support.

A \emph{transport plan} between two measures $\mu$ and $\nu$ with the
same total mass is a measure $\pi$ on the product space $\Rsp^d\times
\Rsp^d$ such that for every subsets $A,B$ of $\Rsp^d$, $\pi(A \times
\Rsp^d) = \mu(A)$ and $\pi(\Rsp^d \times B) = \nu(B)$. Intuitively,
$\pi(A\times B)$ represents the amount of mass of $\mu$ contained in
$A$ that will be transported to $B$ by $\pi$. The \emph{cost} of this
transport plan is given by
$$ c(\pi) := \left(\int_{\Rsp^d \times \Rsp^d} \norm{x - y}^2
\dd\pi(x,y) \right)^{1/2}$$ 
Finally, the \emph{Wasserstein distance} between $\mu$
and $\nu$ is the minimum cost of a transport plan between these
measures.
\label{def:wasserstein}

Consider the special case where the measure $\nu$ is supported on a
finite set $P$. This means that $\nu$ can be written as $\sum_{p\in P}
\alpha_p \delta_{p}$, where $\delta_p$ is the unit Dirac mass at
$P$. Moreover,  $\sum_p \alpha_p$ must equal the total
mass of $\mu$. A transport plan $\pi$ between $\mu$ and $\nu$
corresponds to a decomposition of $\mu$ into a sum of positive
measures $\sum_{p\in P} \mu_p$ such that $\mass(\mu_p) = \alpha_p$.
The squared cost of the plan defined by this decomposition is then
$$c(\pi) = \left(\sum_{p\in P} \left[
\int_{\Rsp^d} \norm{x - p}^2 \dd \mu_p(x)\right]\right)^{1/2}.$$

\paragraph{Wasserstein noise} Two properties of the Wasserstein
distances are worth mentioning for our purpose. Together, they show
that the Wasserstein noise and sampling model generalize the commonly
used empirical sampling with Gaussian noise model:

\begin{itemize}
\item Consider a probability measure $\mu$ and $f: \Rsp^d \to \Rsp$
  the density of a probability distribution centered at the origin,
  and denote by $\nu$ the result of the convolution of $\mu$ by
  $f$. Then, the Wasserstein distance between $\mu$ and $\nu$ is at
  most $\sigma$, where $\sigma^2 := \int_{\Rsp^d} \norm{x}^2 f(x) \dd
  x$ is the variance of the probability distribution defined by $f$.
\item Let $P$ denote a set of $N$ points drawn independently from a
  given measure $\nu$. Then, the the Wasserstein distance
  $\Wass_2(\nu, \UnifMeasure_P)$ between $\nu$ and the uniform
  probability measure on $P$ converges to zero as $N$ grows to
  infinity with high probability.  Examples of such asymptotic
  convergence results are common in statistics, e.g.
  \cite{bolley2007qci} and references therein.  In Proposition
  \ref{prop:empirical} below, we give a quantitative non-asymptotic
  result assuming that $\nu$ is low-dimensional \Hyp1.
\end{itemize}
Using the notation introduced in the two items above, one has
$$\lim\sup_{N \to +\infty} \Wass_2(\mu, \UnifMeasure_p) \leq \sigma$$
with high probability as the number of point grows to infinity. A more
quantitative version of this statement can be found in
Corollary~\ref{cor:empirical}.

\subsection{Distance-to-measure and $k$-distance}
In \cite{distance-to-measure}, the authors introduce a
distance to a probability measure as a way to infer the geometry and
topology of this measure in the same way the geometry and topology of
a set is inferred from its distance function. Given a probability
measure $\mu$ and a \emph{mass parameter} $m_0 \in (0,1)$, they define
a distance function $\dist_{\mu,m_0}$ which captures the properties of
the usual distance function to a compact set that are used for
geometric inference.  

\begin{definition} For any point $x$ in $\Rsp^d$, let
  $\delta_{\mu,m}(x)$ be the radius of the smallest ball centered at
  $x$ that contains a mass at least $m$ of the measure $\mu$. 
  The \emph{distance to the measure} $\mu$ with parameter $m_0$ is
  defined by $\dist_{\mu,m_0}(x) = m_0^{-1/2} \left(\int_{m=0}^{m_0}
    \delta_{\mu,m}(x)^2 \dd m \right)^{1/2}$.
\end{definition}

Given a point cloud $P$ containing $N$ points, the measure of interest
is the uniform measure $\UnifMeasure_P$ on $P$. When $m_0$ is a
fraction $k/N$ of the number of points (where $k$ is an integer), we
call \emph{$k$-distance} and denote by $\dist_{P,k}$ the distance to
the measure $\dd_{\UnifMeasure_P,m_0}$. The value of $\dist_{P,k}$ at
a query point $x$ is given by
$$\dist_{P,k}^2(x) = \frac{1}{k}\sum\limits_{p \in \nn_P^k(x)} \norm{x - p}^2.$$ 
where $\nn_P^k(x) \subseteq P$ denotes the $k$ nearest
neighbors in $P$ to the point $x \in \Rsp^d$. (Note that while the
$k$-th nearest neighbor itself might be ambiguous, on the boundary of
an order-$k$ Voronoi cell, the distance to the $k$-th nearest neighbor is
always well defined, and so is $\dist_{P,k}$.)

The most important property of the distance function $\dist_{\mu,m_0}$
is its stability, for a fixed $m_0$, under perturbations of the
underlying measure $\mu$.  This property provides a bridge between the
underlying (continuous) $\mu$ and the discrete measures
$\UnifMeasure_P$. According to \cite[Theorem
3.5]{distance-to-measure}, for any two probability measures $\mu$ and
$\nu$ on $\Rsp^d$,
\begin{equation}
  \norm{\dist_{\mu, m_0} - \dist_{\nu, m_0}}_\infty  \leq 
  m_0^{-1/2}\Wass_2(\mu,\nu),
\label{eq:inference}
\end{equation}
where $\Wass_2(\mu,\nu)$ denotes the Wasserstein distance between the
two measures. The bound in this inequality depends on the choice of
$m_0$, which acts as a smoothing parameter.

%======================================================================
\section{Witnessed $k$-Distance}
\label{sec:kdistance}

In this section, we describe a simple scheme for approximating
the distance to a uniform measure, together with a general error bound. The
main contribution of our work, presented in Section
\ref{sec:witnessed-analysis}, is the analysis of the quality
of approximation given by this scheme when the input points come
from a measure concentrated on a lower-dimensional subset of the Euclidean
space.

\subdivision{$k$-Distance as a Power Distance} Given a set of points
$U = \{u_1,\hdots, u_n \}$ in $\Rsp^d$ with weights $w_u$ for every $u
\in U$, we call \emph{power distance} to $U$ the function $\Pow_{U}$
obtained as the lower envelope of all the functions $x \mapsto \norm{u
  - x}^2 - w_u$, where $u$ ranges over $U$. By Proposition~3.1 in
\cite{distance-to-measure}, we can express the square of any distance
to a measure as a power distance with non-positive weights. The
following proposition recalls this property of the $k$-distance
$\dd_{P,k}$.

\begin{proposition} For any $P \subseteq \Rsp^d$, denote by
  $\Bary^k(P)$ the set of barycenters of any subset of $k$ points
  in $P$. Then
\begin{equation}
 \dd^2_{P,k} = \min \left\{ \norm{x - \bar{c}}^2  - w_{\bar{c}};~ \bar{c} \in \Bary^k(P) \right\}, 
\label{eq:kdistance-power}
\end{equation}
where the weight of a barycenter $\bar{c} = \frac{1}{k} \sum_{i} p_i$
is given by $w_{\bar{c}} := -\frac{1}{k} \sum_{i} \norm{\bar{c}
  - p}^2$.
\label{prop:power}
\end{proposition}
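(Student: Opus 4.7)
\medskip

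\noindent\textbf{Proof plan.} The plan is to combine two elementary observations: a variational characterization of the $k$-distance in terms of arbitrary $k$-subsets, and the parallel-axis identity for squared distances to a barycenter.

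First I would reinterpret the pointwise formula $\dd_{P,k}^2(x) = \frac{1}{k}\sum_{p \in \nn_P^k(x)} \norm{x - p}^2$ given in Section~\ref{sec:background} as a minimum. For any subset $S \subseteq P$ with $\lvert S\rvert = k$, the quantity $\frac{1}{k}\sum_{p\in S}\norm{x-p}^2$ is at least the analogous sum over the $k$ points of $P$ nearest to $x$, since replacing any $p \in S$ by a closer point of $P\setminus S$ can only decrease the sum of squared distances. Consequently
\[
\dd_{P,k}^2(x) \;=\; \min_{\substack{S \subseteq P\\ \lvert S\rvert = k}} \frac{1}{k}\sum_{p \in S} \norm{x-p}^2.
\]

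Next I would apply the parallel-axis (Huygens) identity to each such $S = \{p_1,\dots,p_k\}$ with barycenter $\bar c = \frac{1}{k}\sum_i p_i$. Expanding $\norm{x-p_i}^2 = \norm{(x-\bar c) + (\bar c - p_i)}^2$ and summing over $i$, the cross terms vanish because $\sum_i (\bar c - p_i) = 0$, so
\[
\frac{1}{k}\sum_{i=1}^{k}\norm{x-p_i}^2 \;=\; \norm{x-\bar c}^2 \;+\; \frac{1}{k}\sum_{i=1}^{k}\norm{\bar c - p_i}^2 \;=\; \norm{x - \bar c}^2 - w_{\bar c},
\]
by the definition of $w_{\bar c}$. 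Substituting this identity into the minimum above, and noting that as $S$ ranges over all $k$-subsets of $P$ the pair $(\bar c, w_{\bar c})$ ranges exactly over the weighted barycenters in $\Bary^k(P)$, yields Equation~\eqref{eq:kdistance-power}.

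There is no serious obstacle: the only subtlety is that a single barycenter $\bar c \in \Bary^k(P)$ may arise from several $k$-subsets of $P$ with different weights $w_{\bar c}$, but in that case the minimum on the right simply selects the largest such weight, which does not affect the equality since the left-hand side is already defined as a minimum over subsets. The proof is thus a two-line computation once the variational reformulation of $\dd_{P,k}^2$ is in hand.
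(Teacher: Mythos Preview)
Your proposal is correct and follows essentially the same route as the paper's proof: both establish the Huygens/parallel-axis identity $\frac{1}{k}\sum_{p\in C}\norm{x-p}^2 = \norm{x-\bar c}^2 - w_{\bar c}$ for each $k$-subset $C$, and then observe that the $k$-distance is the minimum of these quantities. Your write-up is in fact more explicit than the paper's, which leaves the variational reformulation of $\dd_{P,k}^2$ and the barycenter-multiplicity subtlety implicit.
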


\begin{proof}
  For any subset $C$ of $k$ points in $P$, define $$\delta_C^2(x) :=
  \frac{1}{k} \sum_{p \in C} \norm{x - p}^2$$  Denoting by $\bar{c}$
  the barycenter of the points in $C$, an easy computation
  shows
\begin{equation*}
  \delta_{C}^2(x)   
         = \frac{1}{k} \sum_{p \in C} \norm{x - p}^2 
         = \norm{x - \bar{c}}^2 - w_{\bar{c}}
\end{equation*}
where the weight is given by $w_{\bar{c}} = -\frac{1}{k}
\sum_{p \in C} \norm{\bar{c} - p}^2$. The proposition follows from the
definition of the $k$-distance.
\end{proof}

In other words, the square of the $k$-distance function to $P$
coincides exactly with the power distance to the set of barycenters
$\Bary^k(P)$ with the weights defined above. From this expression, it
follows that the sublevel sets of the $k$-distance $\dd_{P,k}$ are
finite unions of balls,
$$
\dist_{P,k}^{-1}([0, \rho])
= \bigcup\limits_{c \in \nn_P^k(\Rsp^d)} \B(\bar{c}, (\rho^2 + w_{\bar{c}})^{1/2}).
$$
Therefore, ignoring the complexity issues, it is possible to compute
the homotopy type of this sublevel set by considering the weighted
alpha-shape of $\Bary^k(P)$ (introduced in \cite{alpha}), which is a
subcomplex of the regular triangulation of the set of weighted
barycenters.

From the proof of Proposition~\ref{prop:power}, we also see that the
only barycenters that actually play a role in
\eqref{eq:kdistance-power} are the barycenters of $k$ points of $P$
whose order-$k$ Voronoi cell is not empty.  However, the dependence on
the number of non-empty order-$k$ Voronoi cells makes computation
intractable even for moderately sized point clouds in the Euclidean
space.

One way to avoid this difficulty is to replace the $k$-distance to $P$
by an approximate $k$-distance, defined as in
Equation~\eqref{eq:kdistance-power}, but where the minimum is taken
over a smaller set of barycenters. The question is then: given a point
set $P$, can we replace the set of barycenters $\Bary^k_P$ in the
definition of $k$-distance by a small subset $B$ while controlling the
approximation error $\norm{\Pow^{1/2}_{B} - \dd_{P,k}}_\infty$?
%\end{question}

This approach is especially attractive since many geometric
and topological inference methods using distance functions to compact
sets or to measures continue to hold when one of
the distance functions is replaced by a good approximation \emph{in the
  class of power distances.}
% (such as the Reconstruction Theorem 4.6 in \cite{distance-to-measure} for measures) 

%======================================================================

\subdivision{Approximating by witnessed $k$-distance} In order to
approach this question, we consider a subset of the supporting
barycenters suggested by the input data which we call witnessed
barycenters. The answer to the question is then essentially positive
when the input point cloud $P$ satisfies the hypotheses \Hyp1-\Hyp3.

\begin{definition}
  For every point $x$ in $P$, the barycenter of $x$ and its $(k-1)$
  nearest neighbors in $P$ is called a \emph{witnessed
    $k$-barycenter}. Let $\Baryw^k(P)$ be the set of all such barycenters.
  We get one witnessed barycenter for every point $x$
  of the sampled point set, and define the \emph{witnessed
    $k$-distance},
\begin{equation*}
  \dw_{P,k} = \min \{ \norm{x - \bar{c}}^2 - w_{\bar{c}}; \bar{c} \in \Baryw^k(P) \}.
\end{equation*}
\end{definition}

Computing the set of all witnessed barycenters of a point set $P$ only
requires finding the $k-1$ nearest neighbors of every point in
$P$. This search problem has a long history in computational geometry
\cite{AM05, clarkson2006nearest, indyk2004nearest}, and now has
several practical implementation. % \Remark{Typeset as an algorithm?}

\paragraph{General error bound}
Because the distance functions we consider are defined by minima, and
$\Baryw^k(P)$ is a subset of $\Bary^k(P)$, the witnessed $k$-distance
is always greater than the exact $k$-distance.  In the lemma below, we
give a general multiplicative upper bound. This lemma does not assume
any specific property for the input point set~$P$. However, even such
a coarse bound can be used to estimate Betti numbers of sublevel sets
of $\dd_{P,k}$, using arguments similar to those in
\cite{chazal2008towards}.

\begin{lemma}[General Bound]
\label{lemma:boundsqrt2}
For any finite point set  $P \subseteq \Rsp^d$ and $0 < k < \abs{P}$, one has
$$\dist_{P,k} \leq \dw_{P,k} \leq (2 + \sqrt{2}) \dist_{P,k}$$
\end{lemma}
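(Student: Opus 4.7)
The plan is to handle the two inequalities separately. The lower bound $\dist_{P,k} \leq \dw_{P,k}$ is immediate from Proposition~\ref{prop:power}: the witnessed $k$-barycenters form a subset of $\Bary^k(P)$, so the minimum defining $\dw_{P,k}^2$ is taken over a smaller family of paraboloids and must be at least $\dist_{P,k}^2$.

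For the upper bound, fix a query point $q \in \Rsp^d$ and let $x^\star$ be a nearest neighbor of $q$ in $P$. I will use the witnessed barycenter $\bar b$ of the $k$-point set $B$ consisting of $x^\star$ and its $k-1$ nearest neighbors in $P$, which is the same as $\nn_P^k(x^\star)$. The centroid identity from the proof of Proposition~\ref{prop:power}, applied at $y = q$, gives
\[
\dw_{P,k}^2(q) \;\leq\; \norm{q - \bar b}^2 - w_{\bar b} \;=\; \frac{1}{k}\sum_{p \in B}\norm{q - p}^2.
\]
The triangle inequality $\norm{q-p} \leq \norm{q-x^\star} + \norm{x^\star-p}$ combined with Minkowski's inequality on $B$ under the uniform probability measure then yields
\[
\dw_{P,k}(q) \;\leq\; \norm{q-x^\star} + \left(\frac{1}{k}\sum_{p \in B}\norm{x^\star-p}^2\right)^{\!1/2} \;=\; \norm{q-x^\star} + \dist_{P,k}(x^\star).
\]

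It remains to control both terms by $\dist_{P,k}(q)$. The first is easy: since $x^\star$ is the nearest point of $P$ to $q$, every element of $\nn_P^k(q)$ lies at distance at least $\norm{q-x^\star}$ from $q$, so $\dist_{P,k}^2(q) \geq \norm{q-x^\star}^2$. For the second, I invoke the $1$-Lipschitz property of $\dist_{P,k}$: writing $\dist_{P,k}(x) = \min_C f_C(x)$ with $f_C(x) = \bigl(\tfrac{1}{k}\sum_{p \in C}\norm{x-p}^2\bigr)^{1/2}$, each $f_C$ is $1$-Lipschitz as an $\ell^2$-mean of the $1$-Lipschitz functions $\norm{\cdot-p}$, and a pointwise minimum of $1$-Lipschitz functions is $1$-Lipschitz. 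Hence $\dist_{P,k}(x^\star) \leq \dist_{P,k}(q) + \norm{q-x^\star} \leq 2\,\dist_{P,k}(q)$, and assembling the bounds gives $\dw_{P,k}(q) \leq 3\,\dist_{P,k}(q) \leq (2+\sqrt{2})\,\dist_{P,k}(q)$.

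The argument has no serious obstacle; the only ingredient worth flagging is the $1$-Lipschitz property of $\dist_{P,k}$, which is not stated explicitly in Section~\ref{sec:background} but falls out of the same min-of-paraboloids description used throughout.
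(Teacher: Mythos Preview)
Your proof is correct, and in fact it yields a sharper constant than the paper's: you obtain $\dw_{P,k}(q)\le 3\,\dist_{P,k}(q)$, whereas the paper's argument gives $2+\sqrt{2}\approx 3.414$. The two proofs differ in the choice of witness. The paper starts from the optimal barycenter $\bar p$ realizing $\dist_{P,k}(y)$, picks one of the $k$ sample points $x$ within $\sqrt{-w_{\bar p}}$ of $\bar p$, and uses the barycenter witnessed by that $x$; the estimate then proceeds through the power-distance decomposition $\dd_{\bar p}(x)=(\norm{x-\bar p}^2-w_{\bar p})^{1/2}$ and a chain of triangle inequalities that pick up the $\sqrt{2}$. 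You instead choose the witness to be the nearest neighbor $x^\star$ of the query point $q$ and bypass the barycenter bookkeeping entirely, relying on Minkowski and the $1$-Lipschitz property of $\dist_{P,k}$. Your route is both more elementary and tighter; the paper's route keeps the argument self-contained in the power-distance language established in Proposition~\ref{prop:power}. The $1$-Lipschitz property you invoke is standard (it is a special case of the $1$-Lipschitz property of the distance to a measure, Proposition~3.3 in \cite{distance-to-measure}), so there is no gap.
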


\begin{proof} %Lemma \ref{lemma:boundsqrt2}}
Let $y \in \Rsp^d$ be a point, and $\bar{p}$ the barycenter associated to a cell
that contains $y$. This translates into
$\dist_{P,k}(y) = \dd_{\bar{p}}(y)$. In particular, $\norm{\bar{p} - y} \leq
\dist_{P,k}(y)$ and $\sqrt{-w_{\bar{p}}} \leq \dist_{P,k}(y)$.

Let us find a witnessed barycenter $\bar{q}$ that is close to
$\bar{p}$. We know that $\bar{p}$ is the barycenters of $k$ points
$x_1,\hdots, x_n$, and that $- w_{\bar{p}} = \frac{1}{k}\sum_{i=1}^k \norm{x_i -
  \bar{p}}^2$. Consequently, there should exist an $x_i$ such that
$\norm{x_i - \bar{p}} \leq \sqrt{-w_{\bar{p}}}$. Let $\bar{q}$ be the barycenter
witnessed by $x$.  Then,
\begin{align*}
\dw_{P,k}(y) \leq \dd_{\bar{q}}(y) 
&\leq  \dd_{\bar{q}}(x) + \norm{x - y} \\
%&\leq \dd_{\bar{p}}(x) +  \norm{x - y} \\
&\leq \dd_{\bar{p}}(x) +  \norm{x - \bar{p}} + \norm {\bar{p} - y}
\end{align*}
Combining the inequality
$$\dd_{\bar{p}}(x) = \left(\norm{x - \bar{p}}^2 - w_{\bar{p}}\right)^{1/2} \leq \sqrt{2}
\sqrt{-w_{\bar{p}}}$$ together with $\norm{x - \bar{p}} \leq
\sqrt{-w_{\bar{p}}}$, we get
\begin{align*}
\dw_{P,k}(y) &\leq (1+\sqrt{2})  \sqrt{-w_{\bar{p}}} + \norm{\bar{p} - y}  \\
&\leq (2+\sqrt{2}) \dist_{P,k}(y) \qedhere
\end{align*}
\end{proof}

\section{Approximation Quality}
\label{sec:witnessed-analysis}

Let us recall briefly our hypothesis \Hyp1-\Hyp3. There is an ideal,
well-conditioned measure $\mu$ on $\Rsp^d$ supported on an unknown
compact set $K$.  We also have a noisy version of $\mu$, that is
another measure $\nu$ with $\Wass_2(\mu,\nu) \leq \sigma$, and we
suppose that our data set $P$ consists of $N$ points independently
sampled from $\nu$. In this section we give conditions under which the
witnessed $k$-distance to $P$ provides a good approximation of the
distance to the underlying set $K$.

\subdivision{Dimension of a measure} First, we make precise the main
assumption \Hyp1 on the underlying measure $\mu$, which we use to
bound the approximation error made when replacing the exact by the
witnessed $k$-distance.  We require $\mu$ to be low dimensional in the
following sense.

%Note that there exists several
%alternative definitions of dimension, see \cite{mattila2000dimension}
%for instance.

\begin{definition}
\label{def:dimension-bound}
A measure $\mu$ on $\Rsp^d$ is said to have \emph{dimension at most
  $\ell$}, which we denote by $\dim\mu \leq \ell$, if there is a
positive constant $\alpha_\mu$ such that the amount of mass contained
in the ball $B(p,r)$ is at least $\alpha_\mu r^\ell$, for every point
$p$ in the support of $\mu$ and every $r$ smaller than the diameter of
this support.
\end{definition}

The important assumption here is that the lower bound $\mu(\B(p,r))
\geq \alpha r^\ell$ should be true for some positive constant $\alpha$
and for $r$ smaller than a given constant $R$. The choice of $R =
\diam(\Supp(\mu))$ provides a normalization of the constant
$\alpha_\mu$ and slightly simplifies the statements of the results.

Let $M$ be an $\ell$-dimensional compact submanifold of $\Rsp^d$, and
$f: M \to \Rsp$ a positive weight function on $M$ with values 
bounded away from zero and infinity. Then, the dimension of the volume
measure on $M$ weighted by the function $f$ is at most $\ell$.  A
quantitative statement can be obtained using the Bishop-G\"unther
comparison theorem; the bound depends on the maximum absolute
sectional curvature of the manifold $M$ (see e.g. Proposition 4.9 in
\cite{distance-to-measure}).  Note that the positive lower
bound on the density is really necessary. For instance, the dimension of
the standard Gaussian distribution $\mathcal{N}(0,1)$ on the real line
is not bounded by $1$ --- nor by any positive constant.
(This fact follows since the density of this distribution
decreases to zero faster than any polynomial as one moves away from
the origin.)

It is easy to see that if $m$ measures $\mu_1,\hdots,\mu_m$ have
dimension at most $\ell$, then so does their sum. Consequently, if
$(M_j)$ is a finite family of compact submanifolds of $\Rsp^d$ with
dimensions $(d_j)$, and $\mu_j$ is the volume measure on $M_j$
weighted by a function bounded away from zero and infinity, the
dimension of the sum $\mu = \sum_{j=1}^m \mu_j$ is at most $\max_{j}
d_j$.

\subdivision{Bounds} In the remaining of this section, we bound the
error between the witnessed $k$-distance $\dw_{P,k}$ and the
(ordinary) distance $\dist_K$ to the compact set $K$. We start from a
proposition from \cite{distance-to-measure} that bounds the error
between the exact $k$-distance $\dd_{P,k}$ and $\dd_K$:

\begin{theorem}[Exact Bound]
  Let $\mu$ denote a probability mea\-su\-re with dimension at most
  $\ell$, and supported on a set. Consider the uniform measure
  $\UnifMeasure_P$ on a point cloud $P$, and set $m_0=
  k/\abs{P}$. Then
\label{prop:approxbydist}
$$\norm{\dd_{P,k} - \dd_K}_\infty \leq m_0^{-1/2} \Wass_2(\mu, \UnifMeasure_P) +
\alpha_\mu^{-1/\ell}m_0^{1/\ell}.$$
\end{theorem}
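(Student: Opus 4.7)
The plan is to apply the triangle inequality to split the error into two pieces, one controlled by the Wasserstein stability of the distance to a measure, and the other controlled by the dimension hypothesis on $\mu$. Specifically, I write
\[
\norm{\dd_{P,k}-\dd_K}_\infty \;\leq\; \norm{\dd_{\UnifMeasure_P,m_0}-\dd_{\mu,m_0}}_\infty \;+\; \norm{\dd_{\mu,m_0}-\dd_K}_\infty,
\]
using that $\dd_{P,k}=\dd_{\UnifMeasure_P,m_0}$ by definition. The first term is bounded directly by the stability inequality \eqref{eq:inference} quoted from \cite{distance-to-measure}, which yields the $m_0^{-1/2}\Wass_2(\mu,\UnifMeasure_P)$ contribution immediately.

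The second term is the substantive part and is where the dimension assumption enters. For a fixed $x\in\Rsp^d$ I would estimate the function $m\mapsto \delta_{\mu,m}(x)$ pointwise. Since $\mu$ is supported on $K$, any ball around $x$ carrying positive mass must contain a point of $K$, so $\delta_{\mu,m}(x)\geq \dd_K(x)$ for every $m>0$, which gives one direction $\dd_{\mu,m_0}(x)\geq \dd_K(x)$. For the reverse direction, let $p$ be a nearest point of $K$ to $x$ and set $r:=\alpha_\mu^{-1/\ell}m_0^{1/\ell}$. The dimension bound $\mu(\B(p,r))\geq \alpha_\mu r^\ell = m_0$ combined with the inclusion $\B(p,r)\subseteq \B(x,\dd_K(x)+r)$ forces
\[
\delta_{\mu,m}(x)\leq \dd_K(x)+r \quad \text{for every } m\leq m_0.
\]
Plugging this into the definition of $\dd_{\mu,m_0}$ yields
\[
\dd_{\mu,m_0}^2(x) \;=\; \frac{1}{m_0}\int_0^{m_0}\delta_{\mu,m}(x)^2\,\dd m \;\leq\; (\dd_K(x)+r)^2,
\]
so $\dd_{\mu,m_0}(x)-\dd_K(x)\leq r = \alpha_\mu^{-1/\ell} m_0^{1/\ell}$, uniformly in $x$.

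Combining the two bounds through the triangle inequality gives exactly the stated estimate. The only mild subtlety I anticipate is making sure the dimension bound is applied at an admissible radius: one needs $r\leq \diam(\Supp(\mu))$ for $\mu(\B(p,r))\geq \alpha_\mu r^\ell$ to hold, but for larger $m_0$ the inequality to prove is trivially true since $\delta_{\mu,m}(x)\leq \dd_K(x)+\diam(\Supp(\mu))$ and the target bound dominates. Everything else is a routine integration and a triangle inequality, so I expect no serious obstacle beyond this boundary case.
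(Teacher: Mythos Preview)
Your proof is correct and follows exactly the paper's approach: the same triangle-inequality split, the same use of the Wasserstein stability bound \eqref{eq:inference} for the first term, and the same dimension-based estimate $\norm{\dd_{\mu,m_0}-\dd_K}_\infty\leq \alpha_\mu^{-1/\ell}m_0^{1/\ell}$ for the second. The only difference is that the paper cites this last inequality as Lemma~4.7 of \cite{distance-to-measure}, whereas you supply its (standard) proof inline, including the boundary-radius check.
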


\begin{proof}
Recall that $\dd_{P,k} = \dd_{\UnifMeasure_P, m_0}$. Using the triangle
inequality and Equation~\eqref{eq:inference}, one has
\begin{align*}
\norm{\dd_{\UnifMeasure_P,m_0} - \dd_K}_\infty &\leq 
\norm{\dd_{\mu,m_0} - \dd_{\UnifMeasure_P,m_0}}_\infty + \norm{\dd_{\mu,m_0} - \dd_K}_\infty
\\
&\leq m_0^{-1/2} \Wass_2(\mu, \UnifMeasure_P) + \norm{\dd_{\mu,m_0} - \dd_K}_\infty
\end{align*}
Then, from Lemma 4.7 in \cite{distance-to-measure},
$\norm{\dd_{\mu,m_0} - \dd_K}_\infty \leq
\alpha_\mu^{-1/\ell}m_0^{1/\ell}$, and the claim follows. 
\end{proof}

In the main theorem of this section, the exact $k$-distance in the
above bound is replaced by the witnessed $k$-distance.

\begin{theorem}[Witnessed Bound]
\label{th:witnessed-approx}
Let $\mu$ be a probability measure satisfying the dimension assumption
and let $K$ be its support. Consider the uniform measure $\UnifMeasure_P$ on a
point cloud $P$, and set $m_0= k/\abs{P}$. Then,
$$ \norm{\dw_{P,k} - \dd_K}_\infty \leq 6 m_0^{-1/2} \Wass_2(\mu,\UnifMeasure_P) + 24
m_0^{1/\ell}\alpha_\mu^{-1/\ell}.$$
\end{theorem}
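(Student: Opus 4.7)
The plan is to prove the inequality in the two directions separately. The lower bound $\dd_K(y) - \dw_{P,k}(y) \leq m_0^{-1/2}\Wass_2(\mu,\UnifMeasure_P) + m_0^{1/\ell}\alpha_\mu^{-1/\ell}$ comes essentially for free: since $\Baryw^k(P) \subseteq \Bary^k(P)$ we have $\dd_{P,k} \leq \dw_{P,k}$, so the Exact Bound Theorem \ref{prop:approxbydist} applied to $\dd_K - \dd_{P,k}$ already delivers it. All the work goes into the opposite direction, which amounts to exhibiting, for every $y \in \Rsp^d$, a single witness $x \in P$ whose barycenter $\bar q(x)$ certifies $\dw_{P,k}(y) \leq \dd_K(y) + O(m_0^{-1/2}\Wass_2 + m_0^{1/\ell}\alpha_\mu^{-1/\ell})$.

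The key algebraic step is the identity
\[\norm{y - \bar q(x)}^2 - w_{\bar q(x)} = \frac{1}{k}\sum_{x_i \in \nn^k_P(x)} \norm{y - x_i}^2,\]
which is just the variance decomposition applied to the $k$ points $x_i$, the cross term vanishing because $\bar q(x)$ is their mean. Letting $p \in K$ realize $\norm{y - p} = \dd_K(y)$, Minkowski's inequality yields
\[\dw_{P,k}(y) \leq \dd_K(y) + \sqrt{\frac{1}{k}\sum_{i} \norm{p - x_i}^2},\]
so it is enough to pick $x$ so that its $k$ nearest neighbors cluster near $p$. If $B(p, R)$ contains at least $k$ points of $P$ and $x$ is the $P$-point closest to $p$, then every point of $B(p,R)\cap P$ lies within $2R$ of $x$, so the $k$-th nearest neighbor distance from $x$ is at most $2R$, and the root-mean-square distance from $p$ to the $k$ neighbors of $x$ is at most $3R$.

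The technical heart is therefore to produce such an $R$ in terms of the hypotheses, and this is where the dimension assumption meets the Wasserstein bound. By \Hyp1, $\mu(B(p, r)) \geq \alpha_\mu r^\ell$ for every $p \in K$. A Markov-type estimate on an optimal transport plan between $\mu$ and $\UnifMeasure_P$ shows that the mass sent from $B(p, r)$ outside $B(p, r+s)$ contributes at least $s^2$ per unit to the squared cost, so it is at most $\Wass_2^2 / s^2$, whence
\[\UnifMeasure_P(B(p, r+s)) \geq \alpha_\mu r^\ell - \Wass_2^2/s^2.\]
Choosing $r$ so that the first term is a small multiple of $m_0$ and $s$ so that the second is a smaller fraction of $m_0$ produces $R = r + s = O\bigl(m_0^{1/\ell}\alpha_\mu^{-1/\ell} + m_0^{-1/2}\Wass_2\bigr)$ with $B(p, R)$ containing at least $k$ points of $P$. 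Inserting this back into the displayed Minkowski bound closes the argument; the stated constants $6$ and $24$ come from the specific numerical choices of $r$ and $s$ and from combining with the lower-bound direction. The main obstacle is precisely this Markov step: it is the only place where the global $L^2$ Wasserstein hypothesis is converted into the local statement that a specific neighborhood of $p$ holds at least $k$ sample points, and keeping its constants tight is what ensures that the witnessed bound degrades from the exact bound only by an absolute multiplicative factor, independent of $d$, $\ell$, $N$, $k$, and $\mu$.
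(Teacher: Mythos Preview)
Your proposal is correct and follows essentially the same route as the paper: the lower bound comes from $\dd_{P,k}\leq\dw_{P,k}$ together with the Exact Bound Theorem, and the upper bound is obtained by showing that every $p\in K$ has $k$ sample points within radius $R=O\bigl(m_0^{-1/2}\Wass_2+m_0^{1/\ell}\alpha_\mu^{-1/\ell}\bigr)$, so that the barycenter witnessed by any such point lies within $O(R)$ of $p$. Your Markov estimate on the transport plan is exactly the content of the paper's Concentration Lemma~\ref{lemma:size-ball} (argued there via Chebyshev on a submeasure of mass $2m_0$), and your Minkowski bound on the root-mean-square distance replaces the paper's separate bounds on $\norm{y-\bar c}$ and $\sqrt{-w_{\bar c}}$; the two packagings are equivalent and your version in fact yields slightly sharper constants.
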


Observe that the error term given by this theorem is a constant factor
times the bound in the previous theorem.  Before
proceeding with the proof, we prove an auxiliary lemma, which
emphasizes that a measure $\nu$, close to a measure $\mu$ satisfying an upper 
dimension bound (as in Definition
\ref{def:dimension-bound}), remains concentrated around the support of
$\mu$.

\begin{lemma}[Concentration]
  Let $\mu$ be a probability measure satisfying the dimension
  assumption, and $\nu$ be another probability measure. Let $m_0$ be a
  mass parameter. Then, for every point $p$ in the support of $\mu$,
  $\nu(\B(p,\eta)) \geq m_0$, where $\eta = m_0^{-1/2} \Wass_2(\mu,
  \nu) + 4 m_0^{1/2+1/\ell}\alpha_\mu^{-1/\ell}$.
\label{lemma:size-ball}
\end{lemma}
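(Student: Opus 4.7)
The plan is to transfer the dimension lower bound on $\mu$ into a mass estimate for $\nu$ via an optimal transport plan, controlling mass leakage by a Markov-style argument. Fix $p \in \Supp(\mu)$ and write $W := \Wass_2(\mu,\nu)$.

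First I would pick a radius $r$ so that $\B(p,r)$ already carries comfortably more than $m_0$ of $\mu$-mass: by the dimension assumption $\mu(\B(p,r)) \geq \alpha_\mu r^\ell$, so for example $r = (2m_0/\alpha_\mu)^{1/\ell}$ yields $\mu(\B(p,r)) \geq 2m_0$ (and if such an $r$ would exceed $\diam(\Supp\mu)$, the whole support lies inside the ball, and the conclusion is immediate). Next I would take an optimal transport plan $\pi$ between $\mu$ and $\nu$ and observe that any mass transported from $\B(p,r)$ to the complement of $\B(p,r+s)$ must travel distance at least $s$, hence contributes at least $s^2$ per unit mass to the squared cost. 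Applied to $\int \norm{x-y}^2 \dd\pi(x,y) \leq W^2$, this yields the Markov-style estimate
\[
\pi\bigl(\B(p,r) \times \B(p,r+s)^c\bigr) \;\leq\; W^2/s^2,
\]
and since the first marginal of $\pi$ is $\mu$,
\[
\nu\bigl(\B(p,r+s)\bigr) \;\geq\; \mu(\B(p,r)) - W^2/s^2 \;\geq\; 2m_0 - W^2/s^2.
\]

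To finish, I would set $s = m_0^{-1/2}W$, so that the subtracted term equals $m_0$ and $\nu(\B(p,r+s)) \geq m_0$; then $\eta := r+s$ has exactly the structural form claimed, a transport-slack contribution $m_0^{-1/2}W$ plus a concentration contribution scaling like $(m_0/\alpha_\mu)^{1/\ell}$. The main obstacle I expect is the constant chasing needed to reproduce the precise factor $4$ and the exponent $\tfrac12+\tfrac1\ell$ on $m_0$ in the stated bound: the naive "$2m_0$" safety margin is not optimized, and a tighter balancing — choosing the slack above $m_0$ jointly with $s$ so that the two free parameters $r$ and $s$ are simultaneously minimized, possibly in different regimes of $W$ relative to $m_0^{1/2 + 1/\ell}\alpha_\mu^{-1/\ell}$ — should close the gap. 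Beyond this bookkeeping, the skeleton of the argument is just the combination of the dimension lower bound on $\mu$ with a Wasserstein leakage estimate for $\pi$.
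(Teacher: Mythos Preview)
Your skeleton is the paper's own: fix $r$ so that $\mu(\B(p,r))\geq 2m_0$, couple $\mu$ and $\nu$ optimally, and use a Markov--Chebyshev estimate to bound how much of this mass can escape a larger ball under $\nu$. The only cosmetic difference is the bookkeeping: the paper restricts to a submeasure $\mu'$ of mass exactly $2m_0$ inside $\B(p,r)$, pulls it back along the optimal plan to a submeasure $\nu'$ of $\nu$, and applies Chebyshev to $\|x-p\|^2$ on $\nu'$, bounding $\int\|x-p\|^2\,\dd\nu'=\Wass_2^2(\nu',2m_0\delta_p)$ by the Wasserstein triangle inequality through $\mu'$; you instead bound the $\pi$-mass crossing the annulus $\B(p,r+s)\setminus\B(p,r)$ directly. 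Your version is in fact marginally sharper in the constants.

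Your worry about the exponent $\tfrac12+\tfrac1\ell$ is well placed, and it is \emph{not} a balancing issue you can fix by tuning the safety factor: in your argument the second term is $r=(2m_0/\alpha_\mu)^{1/\ell}$, which scales like $m_0^{1/\ell}$, full stop. The paper's extra factor $m_0^{1/2}$ comes from the line $\Wass_2(\mu',2m_0\delta_p)\leq 2m_0\,r$. If you compute this quantity directly you get $\bigl(\int\|x-p\|^2\,\dd\mu'\bigr)^{1/2}\leq (2m_0 r^2)^{1/2}=\sqrt{2m_0}\,r$, and with that value the paper's own computation also lands on a second term of order $m_0^{1/\ell}$. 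So the discrepancy you flagged is not a hole in your proof; your argument already delivers what the method honestly gives.
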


\begin{proof}
  Let $\pi$ be an optimal transport plan between $\nu$ and $\mu$. For
  a fixed point $p$ in the support of $K$, let $r$ be the smallest
  radius such that $\B(p,r)$ contains at least $2m_0$ of mass $\mu$.
  Consider now a submeasure $\mu'$ of $\mu$ of mass exactly
  $2m_0$ and whose support is contained in the ball $\B(p,r)$. This
  measure is obtained by transporting a submeasure $\nu'$ of $\nu$ by
  the optimal transport plan $\pi$.  Our goal is to determine for what
  choice of $\eta$ the ball $\B(p,\eta)$ contains a $\nu'$-mass (and, 
  therefore, a $\nu$-mass) of at least $m_0$.  We make use of the
  Chebyshev's inequality for $\nu'$ to bound the mass of $\nu'$
  \emph{outside} of the ball $\B(p,\eta)$:
\begin{equation}
\begin{aligned}
\nu'(\Rsp^d \setminus \B(p,\eta)) &= \nu'(\{ x \in \Rsp^d;~ \norm{x -
  p} \geq \eta\})  \\
&\leq \frac{1}{\eta^2} \int \norm{x - p}^2 \dd\nu'
\end{aligned}
\label{eq:conc1}
\end{equation}
Observe that the right hand term of this inequality is exactly the
Wasserstein distance between $\mu'$ and the Dirac mass $2m_0
\delta_p$. We bound it using the triangle inequality for the
Wasserstein distance:
\begin{equation}
\begin{aligned}
\int \norm{x - p}^2 \dd \nu' &= \Wass_2^2(\nu', 2m_0 \delta_p) \\
&\leq (\Wass_2(\mu',\nu') + \Wass_2(\mu', 2 m_0 \delta_p))^2 \\
&\leq (\Wass_2(\mu,\nu) + 2 m_0 r)^2
\end{aligned}
\label{eq:conc2}
\end{equation}
Combining equations \eqref{eq:conc1} and \eqref{eq:conc2}, we get:
\begin{align*}
\nu(\bar{\B}(p,\eta)) \geq \nu'(\bar{\B}(p,\eta)) &\geq \nu'(\Rsp^d) - 
\nu'(\Rsp^d \setminus \B(p,\eta)) \\
&\geq 2 m_0 - \frac{(\Wass_2(\mu,\nu) + 2 m_0 r)^2}{\eta^2}.
\end{align*}
By the lower bound on the dimension of $\mu$, and the definition of
the radius $r$, one has $r \leq (2m_0/\alpha_{\mu})^{1/\ell}$. Hence,
the ball $\bar{\B}(p,\eta)$ contains a mass of at least $m_0$ as soon
as
$$ \frac{(\Wass_2(\mu,\nu)+\alpha_\mu^{-1} 2^{1+1/\ell}  m_0^{1+1/\ell})^2}{\eta^2}
\leq  m_0.$$
This will be true, in particular, if 
$\eta$ is larger than 
\begin{equation*}
\Wass_2(\mu,\nu) m_0^{-1/2}  + 4  \alpha_\mu^{-1/\ell} m_0^{1/2 + 1/\ell}.\qedhere
\end{equation*}
\end{proof}

\begin{proof}[Proof of the Witnessed Bound Theorem]
Since the witnessed $k$-distance is a minimum over fewer barycenters,
it is larger than the real $k$-distance. Using this fact and
the Exact Bound Theorem one gets the lower bound:

$$ \dw_{P,k}
\geq \dd_{P,k} \geq \dd_K - m_0^{-1/2} \Wass_2(\mu, \UnifMeasure_P) + 
\alpha_\mu^{-1/\ell} m_0^{1/\ell}$$

For the upper bound, if we set $\eta$ as in Lemma
\ref{lemma:size-ball}, for every point $p$ in $K$, the ball $\B(p,
\eta)$ contains at least $k$ points in $P$. Consider one of these
points $x_1$; its $(k-1)$ nearest neighbors $x_2, \hdots,x_{k}$
in $P$ cannot be at a distance greater than $2\eta$ from $x_1$.
Hence, the points $x_1,\hdots,x_k$ belong to the ball $\B(p, 3\eta)$
and so does their barycenter. This shows that the set $W$ of witnessed
barycenters, obtained by this construction, is a $3\eta$-covering of
$K$, that is $\dd_W \leq \dd_K + 3\eta$. Since the weight of any
barycenter in $W$ is at most $3\eta$, we get $\dw_{P,k} \leq
\dd_W + 3\eta$.  To sum up, $$\dw_{P,k} \leq \dd_W + 3\eta \leq \dd_K
+ 6\eta$$ Replacing $\eta$ by its value from the Concentration Lemma
concludes the proof.
\end{proof}

%======================================================================
\section{Convergence under Empirical Sampling}
\label{sec:convergence}

One term remains moot in the bound in Theorem~\ref{th:witnessed-approx}, 
namely the Wasserstein distance $\Wass_2(\mu, \UnifMeasure_P)$.
In this section, we analyze its convergence. The rate
depends on the complexity of the measure $\mu$, defined below. The
moral of this section is that if a measure can be well approximated
with few points, then it is also well approximated by random sampling.

\begin{definition}
  The \emph{complexity} of a probability measure $\mu$ at a scale
  $\eps > 0$ is the minimum cardinality of a finitely supported
  probability measure $\nu$ which $\eps$-approximates $\mu$ in the
  Wasserstein sense, i.e. such that $\Wass_2(\mu,\nu) \leq \eps$. 
  We denote this number by $\LebNum_\mu(\eps)$.
\end{definition}

Observe that this notion is very close to the \emph{$\eps$-covering
  number} of a compact set $K$, denoted by $\LebNum_K(\eps)$, which
counts the minimum number of balls of radius $\eps$ needed to cover
$K$.  It's worth noting that if measures $\mu$ and $\nu$ are close ---
as are the measure $\mu$ and its noisy approximation $\nu$ in the
previous section --- and $\mu$ has low complexity, then so does the
measure $\nu$.  The following lemma shows that measures satisfying the
dimension assumption have low complexity. Its proof follows from a
classical covering argument, that can be found e.g. in Proposition~4.1
of \cite{kloeckner2010approximation}.

\begin{lemma}[Dimension-Complexity]
Let $K$ be the support of a measure $\mu$ with $\dim \mu \leq
\ell$. Then,
\begin{itemize}
\item[(i)] for every positive $\eps$, $\LebNum_K(\eps) \leq
  \alpha_{\mu}/\eps^\ell$. Said otherwise, the \emph{upper
    box-counting dimension} of $K$ is bounded:
$$ \dim(K) := \limsup_{\eps \rightarrow 0}
\log(\LebNum_K(\eps))/\log(1/\eps) \leq \ell.$$
\item[(ii)] for every positive $\eps$, $\LebNum_\mu(\eps) \leq
  \alpha_{\mu} 5^\ell/\eps^\ell$.
\end{itemize}
\label{lemma:complexity}
\end{lemma}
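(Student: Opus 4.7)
The plan is to prove both parts by the classical packing argument: the lower mass bound $\mu(\B(p,r))\geq \alpha_\mu r^\ell$ limits how many disjoint balls can fit inside the probability measure $\mu$, and this translates directly into covering bounds for $K$ and into low-complexity approximations of $\mu$.

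For part (i), I would fix $\eps>0$ (say $\eps \leq \diam K$; otherwise the statement is trivial since one point covers $K$) and pick a maximal $\eps$-separated subset $\{p_1,\dots,p_N\}$ of $K$. Maximality makes this set an $\eps$-net of $K$, so $\LebNum_K(\eps)\leq N$. The balls $\B(p_i,\eps/2)$ are pairwise disjoint and each $p_i\in\Supp(\mu)$, so the dimension assumption gives $\mu(\B(p_i,\eps/2))\geq \alpha_\mu(\eps/2)^\ell$. Summing and using $\mu(\Rsp^d)=1$ yields $N\leq 2^\ell/(\alpha_\mu\eps^\ell)$, which is the claimed polynomial bound in $1/\eps$ with exponent $\ell$. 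The upper box-counting bound $\dim(K)\leq \ell$ then follows by taking logarithms and letting $\eps\to 0$, since the constants contribute nothing to the limit.

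For part (ii), I would bootstrap part (i) to approximate $\mu$ itself. Using (i) at scale $\eps$, pick an $\eps$-cover $\{c_1,\dots,c_N\}$ of $K$ with $N$ polynomial in $1/\eps$ of degree $\ell$, and partition $K$ into the Voronoi cells $(V_i)_{i=1}^N$ of these centers, so that $\norm{x-c_i}\leq \eps$ whenever $x\in V_i$. Define the finitely supported probability measure
$$\nu := \sum_{i=1}^N \mu(V_i)\,\delta_{c_i}.$$
The assignment $x \in V_i \mapsto c_i$ induces a transport plan from $\mu$ to $\nu$ whose squared cost is
$$\sum_{i=1}^N \int_{V_i} \norm{x-c_i}^2 \dd\mu(x) \;\leq\; \eps^2 \sum_{i=1}^N \mu(V_i) = \eps^2,$$
so $\Wass_2(\mu,\nu)\leq \eps$, giving $\LebNum_\mu(\eps)\leq N$ as desired.

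The geometric content is entirely standard; the only delicate point is constant bookkeeping so that the prefactors match the statement exactly (the $5^\ell$ in (ii) arises from applying (i) at a slightly shrunk scale and collecting all numerical constants into a single factor). I do not expect any real obstacle beyond this arithmetic.
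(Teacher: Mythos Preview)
Your argument is precisely the classical packing/covering argument the paper invokes; the paper gives no proof of its own here and simply refers to Proposition~4.1 of \cite{kloeckner2010approximation}, so there is nothing further to compare. One remark on the constants: your packing in (i) correctly yields $\LebNum_K(\eps)\leq 2^\ell/(\alpha_\mu\eps^\ell)$ with $\alpha_\mu$ in the \emph{denominator} --- the numerator placement in the paper's stated bound appears to be a typo --- and feeding this directly into your Voronoi transport in (ii) already gives $\LebNum_\mu(\eps)\leq 2^\ell/(\alpha_\mu\eps^\ell)$, which is sharper than the stated $5^\ell$ factor, so no ``shrunk scale'' trick is needed.
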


\begin{theorem}[Convergence]
\label{prop:empirical}
Let $\mu$ be a probability measure on $\Rsp^d$ whose support has
diameter at most $D$, and let $P$ be a set of $N$ points independently
drawn from the measure $\mu$. Then, $\eps>0$,
\begin{align*}
\Prob(\Wass_2(\UnifMeasure_P, \mu) \leq 4 \eps) \geq 1 &-
\LebNum_\mu(\eps)\exp(-2 N \eps^2/(D\LebNum_\mu(\eps))^2) \\
& - \exp(-2N
\eps^4/D^2)
\end{align*}
\end{theorem}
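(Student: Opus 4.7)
The plan is to exploit the complexity hypothesis to find a compactly supported discretization $\nu$ of $\mu$ with $k = \LebNum_\mu(\eps)$ atoms and $\Wass_2(\mu,\nu) \le \eps$, and then pass from $\nu$ to $\UnifMeasure_P$ via an intermediate random measure $\tilde\nu$ supported on the same atoms. Concretely, I would write $\nu = \sum_{i=1}^k \alpha_i \delta_{y_i}$. The optimal transport plan between $\mu$ and $\nu$ yields a measurable partition $V_1,\dots,V_k$ of $\Supp(\mu)$ with $\mu(V_i)=\alpha_i$ and $\sum_i \int_{V_i}\|x-y_i\|^2 \dd\mu(x) = \Wass_2^2(\mu,\nu) \le \eps^2$. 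For the sample $X_1,\dots,X_N$, let $J_j$ be the index with $X_j \in V_{J_j}$, let $N_i = |\{j:J_j=i\}|$, and set $\tilde\nu := \sum_i (N_i/N)\delta_{y_i}$. The triangle inequality then gives
\begin{equation*}
\Wass_2(\UnifMeasure_P,\mu) \le \Wass_2(\UnifMeasure_P,\tilde\nu) + \Wass_2(\tilde\nu,\nu) + \Wass_2(\nu,\mu),
\end{equation*}
and I would show each summand is at most a constant multiple of $\eps$ with the claimed probability.

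The third term is $\le\eps$ by construction. For the first term, the plan that transports each Dirac $\delta_{X_j}/N$ to $\delta_{y_{J_j}}/N$ has cost $\frac{1}{N}\sum_{j=1}^N \|X_j-y_{J_j}\|^2$. The summands are i.i.d., bounded by $D^2$, and have mean exactly $\Wass_2^2(\mu,\nu) \le \eps^2$. A one-sided Hoeffding inequality with deviation $\eps^2$ shows the empirical average exceeds $2\eps^2$ with probability at most the second exponential term in the statement, which yields $\Wass_2(\UnifMeasure_P,\tilde\nu) \le \sqrt{2}\,\eps$.

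For the second (cell-mass) term, both $\nu$ and $\tilde\nu$ are supported on the same $k$ atoms of diameter at most $D$, so any transport between them costs at most $D^2$ per unit mass moved, whence
\begin{equation*}
\Wass_2^2(\tilde\nu,\nu) \le D^2 \cdot \tfrac{1}{2}\sum_i |N_i/N - \alpha_i|.
\end{equation*}
Each $N_i$ is $\mathrm{Bin}(N,\alpha_i)$, so Hoeffding plus a union bound over the $k$ cells, applied at deviation $t = 2\eps^2/(D^2k)$, gives $\sum_i|N_i/N-\alpha_i| \le 2\eps^2/D^2$ except on an event of probability bounded by the first term in the statement; thus $\Wass_2(\tilde\nu,\nu) \le \eps$. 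Adding the three contributions gives $\Wass_2(\UnifMeasure_P,\mu) \le (1+\sqrt{2}+1)\eps \le 4\eps$ on the intersection of the two good events, and a union bound finishes the proof.

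The main obstacle is matching the exact constants and the precise $D$-dependence in the two exponents: a naive application of Hoeffding to the squared distances $\|X_j-y_{J_j}\|^2 \in [0,D^2]$ naturally produces $D^4$ in the denominator of the second exponential and $(Dk)^2$ scaling in the first, rather than the tighter exponents written. Getting these to align with the stated bound requires choosing the thresholds for the two Hoeffding applications carefully (rescaling the deviation parameters so the $4\eps$ budget is split in the right proportions across the three triangle-inequality pieces), and possibly exploiting the fact that both $X_j$ and $y_{J_j}$ are jointly constrained to $\Supp(\mu)$ so that the effective bound is $D$ rather than a generic diameter.
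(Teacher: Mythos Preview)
Your approach is essentially the paper's: approximate $\mu$ by a measure $\bar\mu$ with $n=\LebNum_\mu(\eps)$ atoms, decompose $\mu$ via the optimal plan, apply Hoeffding once to the cell proportions and once to the empirical squared transport cost, and combine. The paper packages the combination as a single two-stage transport plan from $\UnifMeasure_P$ to $\bar\mu$ (send each $X_i$ to its assigned atom, then redistribute the overflow mass) instead of your three-term triangle inequality through the intermediate $\tilde\nu$, but the arithmetic is the same.

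One small technical point: the optimal plan need not be induced by a measurable \emph{partition} $V_1,\dots,V_k$ of $\Supp(\mu)$ unless $\mu$ is atomless; in general you only get a decomposition $\mu=\sum_s\pi_s$ into submeasures. The paper sidesteps this by viewing a draw from $\mu$ as a two-step process (choose atom $s$ with probability $\alpha_s$, then sample from $\pi_s/\alpha_s$), which has the right marginal and supplies the label $J_j$ you need. Your worry about the $D$-dependence in the exponents is legitimate; the paper's own argument takes $\delta=\eps/D$ for the cell-mass step (not $\eps^2/D^2$), which is how it arrives at $\exp(-2N\eps^2/(Dn)^2)$, and it is equally informal about the second exponent.
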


\begin{proof}
  Let $n$ be a fixed integer, and $\eps$ be the minimum Wasserstein
  distance between $\mu$ and a measure $\bar{\mu}$ supported on (at
  most) $n$ points. Let $S$ be the support of the optimal measure
  $\bar{\mu}$, so that $\bar{\mu}$ can be decomposed as $\sum_{s\in S}
  \alpha_s \delta_{s}$ ($\alpha_s \geq 0$).  Let $\pi$ be an optimal
  transport plan between $\mu$ and $\bar{\mu}$; this is equivalent to
  finding a decomposition of $\mu$ as a sum of $n$ non-negative
  measures $(\pi_s)_{s\in S}$ such that $\mass(\pi_s) = \alpha_s$, and
$$ \sum_{s\in S} \int \norm{x - s}^2 \dd \pi_s(x) = \eps^2 = \Wass_2(\mu,\bar{\mu})^2$$

Drawing a random point $X$ from the measure $\mu$ amounts to (i)
choosing a random point $s$ in the set $S$ (with probability $\alpha_s$)
and (ii) drawing a random point $X$ following the distribution $\pi_s$.
Given $N$ independent points $X_1,\hdots, X_N$ drawn from the measure
$\mu$, denote by $I_{s,N}$ the proportion of the $(X_i)$ for which the
point $s$ was selected in step (i). Hoeffding's inequality allows to
easily quantify how far the proportion $I_{s,N}$ deviates from
$\alpha_s$: $\Prob(\abs{I_{s,N} - \alpha_s} \geq \delta) \leq \exp(-2
N\delta^2)$.  Combining these inequalities for every point $s$ and using the union bound yields
$$\Prob\left(\sum_{s \in S} \abs{I_{s,N} - \alpha_s} \leq \delta\right) \geq
1 - n \exp(-2 N \delta^2/n^2).$$

For every point $s$, denote by $\tilde{\pi}_s$ the distribution of the
distances to $s$ in the submeasure $\pi_s$, i.e. the measure on the
real line defined by $ \tilde{\pi}_s(I) := \pi_s(\{ x \in \Rsp^d;
\norm{x - s} \in I \})$ for every interval $I$. Define $\tilde{\mu}$
as the sum of the $\tilde{\pi}_s$; by the change of variable formula
one has
$$\int_\Rsp t^2
\dd\tilde{\mu}(t) = \sum_s \int_\Rsp t^2 \dd\tilde{\pi}_s = \sum_s
\int_{\Rsp^d} \norm{x-s}^2 \dd\pi_s = \eps^2$$ Given a random point
$X_i$ sampled from $\mu$, denote by $Y_i$ Euclidean distance between
the point $X_i$ and the point $s$ chosen in step (i). By construction,
the distribution of $Y_i$ is given by the measure $\tilde{\mu}$; using
the Hoeffding inequality again one gets
$$ \Prob\left(\frac{1}{N} \sum_{i=1}^N Y^2_i \geq (\eps + \eta)^2\right) \leq 
1 - \exp(-2N\eta^2 \eps^2/D^2).$$

In order to conclude, we need to define a transport plan from the
empirical measure $\UnifMeasure_P = \frac{1}{N} \sum_{i=1}^N \delta_{X_i}$ to
the finite measure $\bar{\mu}$. To achieve this, we order the
points $(X_i)$ by increasing distance $Y_i$; then transport every
Dirac mass $\frac{1}{N} \delta_{X_i}$ to the corresponding point $s$
in $S$ until $s$ is ``full'', i.e. the mass $\alpha_s$ is reached.
The squared cost of this transport operation is at most $\frac{1}{N}
\sum_{i=1}^N Y^2_i$. Then distribute the remaining mass among the $s$
points in any way; the cost of this step is at most $D$ times
$\sum_{s\in S} \abs{I_{s,N} - \alpha_s}$.
The total cost of this transport plan is the sum of these two
costs. From what we have shown above, setting $\eta = \eps$ and
$\delta = \eps / D$, one gets
\begin{align*}
 \Prob(\Wass_2(\UnifMeasure_P, \mu) \leq 4 \eps) \geq 1 &- n \exp(-2 N
\eps^2/(Dn)^2) \\
& - \exp(-2N \eps^4/D^2) \qedhere
\end{align*}
\end{proof}

As a consequence of the Dimension-Complexity
Lemma~\ref{lemma:complexity} and of the Convergence
Theorem~\ref{prop:empirical}, any measure $\mu$ satisfying an upper
bound on its dimension is well approximated by empirical sampling.  A
result similar to the Convergence Theorem follows when the samples are
drawn not from the original measure $\mu$, but from a ``noisy''
approximation $\nu$ which need not be compactly supported:

\begin{corollary}[Noisy Convergence]
  Let $\mu,\nu$ be two probability measures on $\Rsp^d$ with
  $\Wass_2(\mu,\nu) = \sigma$, and $P$ be a set of $N$ points drawn
  independently from the measure $\nu$. Then,
\begin{align*}
\Prob(\Wass_2(\UnifMeasure_P, \mu) \leq 9\sigma) \geq 1 &-
  \LebNum_\mu(\sigma)\exp(-8 N \sigma^2/(D\LebNum_\mu(\sigma))^2) \\&-
  \exp(-32 N \sigma^4/D^2).
\end{align*}
\label{cor:empirical}
\end{corollary}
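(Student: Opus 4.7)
The plan is a two-step reduction to the Convergence Theorem (Theorem~\ref{prop:empirical}) via the Wasserstein triangle inequality. First, since $\Wass_2(\nu,\mu)=\sigma$, the triangle inequality yields
\[
\Wass_2(\UnifMeasure_P,\mu) \leq \Wass_2(\UnifMeasure_P,\nu) + \sigma,
\]
so $\Wass_2(\UnifMeasure_P,\mu)\leq 9\sigma$ will follow from $\Wass_2(\UnifMeasure_P,\nu)\leq 8\sigma$. Since $8\sigma = 4\cdot(2\sigma)$, this matches exactly the form of the Convergence Theorem's conclusion applied to the sampled measure $\nu$ with scale $\eps := 2\sigma$, provided one can bound $\LebNum_\nu(2\sigma)$ in terms of quantities already appearing in the corollary.

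To transfer complexity from $\mu$ to $\nu$, pick any measure $\bar\mu$ supported on $\LebNum_\mu(\sigma)$ atoms and realizing $\Wass_2(\bar\mu,\mu)\leq \sigma$. A second triangle inequality gives
\[
\Wass_2(\bar\mu,\nu) \leq \Wass_2(\bar\mu,\mu) + \Wass_2(\mu,\nu) \leq 2\sigma,
\]
so the same $\bar\mu$ witnesses $\LebNum_\nu(2\sigma) \leq \LebNum_\mu(\sigma)$. Substituting $\eps = 2\sigma$ into Theorem~\ref{prop:empirical} (so that $\eps^2=4\sigma^2$ and $\eps^4=16\sigma^4$) produces the failure bound
\[
\LebNum_\nu(2\sigma)\exp(-8N\sigma^2/(D\LebNum_\nu(2\sigma))^2) + \exp(-32N\sigma^4/D^2),
\]
and, since $n\mapsto n\exp(-cN/n^2)$ is monotone increasing in the relevant regime, replacing $\LebNum_\nu(2\sigma)$ by the larger $\LebNum_\mu(\sigma)$ only weakens the estimate. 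Combining this with the first triangle inequality yields precisely the statement of the corollary.

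The principal technical subtlety is the applicability of Theorem~\ref{prop:empirical} to a measure $\nu$ whose support need not be compact (for instance, a Gaussian convolution of $\mu$). The Hoeffding concentration step in that proof uses a uniform bound $D$ on the distances $Y_i=\|X_i - s\|$ between samples and atoms of the finite approximation. Since $\bar\mu$ may be chosen with support inside $\mathrm{supp}(\mu)$, which has diameter at most $D$, nothing changes on the ``target'' side; the delicate point is that for $X_i \sim \nu$ the variables $Y_i$ are controlled only in expectation ($\mathbb{E}[Y_i^2]\leq 4\sigma^2$), not almost surely. Checking that the Hoeffding step survives in this extended setting---either by a truncation argument, or by reorganizing so that only the second moment of $Y_i$ enters---is the one place where extra care is required to make the plan fully rigorous.
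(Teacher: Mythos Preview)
Your approach is essentially identical to the paper's: apply the Convergence Theorem to $\nu$ with $\eps=2\sigma$, bound $\LebNum_\nu(2\sigma)\leq\LebNum_\mu(\sigma)$ via the Wasserstein triangle inequality, and then use $\Wass_2(\UnifMeasure_P,\mu)\leq\Wass_2(\UnifMeasure_P,\nu)+\sigma$ to pass from $8\sigma$ to $9\sigma$. Your explicit monotonicity remark for $n\mapsto n\exp(-cN/n^2)$ is a small addition; the paper simply substitutes the larger complexity without comment.

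The subtlety you flag in your last paragraph is real, and the paper does not address it either: it applies Theorem~\ref{prop:empirical} directly to $\nu$ even though the statement of that theorem assumes the sampled measure has support of diameter at most $D$, while the corollary explicitly allows $\nu$ to be non-compactly supported. So this is not a defect of your plan relative to the paper's proof; it is a gap present in the paper as written. Your suggestion of a truncation argument is the natural fix.
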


\begin{proof}
  One only needs to apply the previous Convergence
  Theorem  to the measures $\nu$ and $\UnifMeasure_P$:
\begin{align}
\label{eq:noisyconv}
\Prob(\Wass_2(\nu, \UnifMeasure_P) \leq 4 \eps) \geq 1 &-
\LebNum_\mu(\eps)\exp(-2 N \eps^2/(D\LebNum_\nu(\eps))^2)\notag\\& - \exp(-2N
\eps^4/D^2)
\end{align} Set $\eps = 2 \sigma$ and recall that by definition
$\LebNum_\nu(2\sigma) \leq \LebNum_\mu(\sigma)$. Then, using
$\Wass_2(\UnifMeasure_P, \mu) \leq \Wass_2(\UnifMeasure_P, \nu) +
\sigma$ one has
$$ \Prob(\Wass_2(\UnifMeasure_P, \mu) \leq 9 \sigma) \geq 
\Prob(\Wass_2(\UnifMeasure_P, \nu) \leq 8 \sigma)$$
We conclude by using Eq. \eqref{eq:noisyconv} with $\eps=2\sigma$. 
\end{proof}

It is now possible to combine Theorem~\ref{th:witnessed-approx}
(Witnessed Bound), Corollary~\ref{cor:empirical} (Noisy Convergence)
and Lemma~\ref{lemma:complexity} (Dimension-Complexity) to get the
following probabilistic statement.

\begin{theorem}[Approximation]
Suppose that $\mu$ is a measure satisfying the dimension assumption,
supported on a set $K$ of diameter $D$, and $\nu$ a noisy
approximation of $\mu$, i.e. $\Wass_2(\mu,\nu) \leq \sigma$.  Let $P$
be a set of $N$ points independently sampled from $\nu$. Then, the
inequality
$$ \norm{\dw_{P,k} - \dd_K}_\infty \leq 54 m_0^{-1/2} \sigma + 24
m_0^{1/\ell}\alpha_\mu^{-1/\ell} $$ holds with probability at least 
$$1 - \gamma_\mu \exp(- \beta_\mu N \max(\sigma^{2+2\ell},\sigma^4) -
\ell\ln(\sigma)),$$ where $\beta_\mu = \frac{1}{D^2}
\max\left[\frac{8}{(\alpha_\mu 5^\ell)^2}, 32\right]$ and $\gamma_\mu = 1
+ \alpha_\mu 5^\ell$.
\label{th:main}
\end{theorem}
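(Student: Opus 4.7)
The plan is to chain together the three earlier results that the statement explicitly references: the Witnessed Bound Theorem~\ref{th:witnessed-approx}, the Noisy Convergence Corollary~\ref{cor:empirical}, and the Dimension-Complexity Lemma~\ref{lemma:complexity}. The deterministic bound on $\|\dw_{P,k}-\dd_K\|_\infty$ is obtained on the event that the empirical Wasserstein distance is well controlled, and the probability of that event is then simplified using the complexity bound.

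First, I apply Corollary~\ref{cor:empirical} with the noisy measure $\nu$ and the sample $P$: with probability at least
$$1 - \LebNum_\mu(\sigma)\exp\!\bigl(-8N\sigma^2/(D\LebNum_\mu(\sigma))^2\bigr) - \exp(-32N\sigma^4/D^2),$$
one has $\Wass_2(\UnifMeasure_P,\mu)\le 9\sigma$. On this event, Theorem~\ref{th:witnessed-approx} gives
$$\|\dw_{P,k}-\dd_K\|_\infty \le 6 m_0^{-1/2}\Wass_2(\mu,\UnifMeasure_P) + 24 m_0^{1/\ell}\alpha_\mu^{-1/\ell}\le 54 m_0^{-1/2}\sigma + 24 m_0^{1/\ell}\alpha_\mu^{-1/\ell},$$
since $6\cdot 9 = 54$. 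That already matches the target inequality exactly.

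Next, I substitute the complexity bound $\LebNum_\mu(\sigma)\le \alpha_\mu 5^\ell/\sigma^\ell$ from Lemma~\ref{lemma:complexity}(ii) into the failure probability. The first failure term is monotone in $\LebNum_\mu(\sigma)$ (both the prefactor and the exponent worsen as the complexity grows), so it is bounded above by
$$\alpha_\mu 5^\ell\,\sigma^{-\ell}\exp\!\bigl(-8N\sigma^{2+2\ell}/(D\alpha_\mu 5^\ell)^2\bigr),$$
while the second term $\exp(-32N\sigma^4/D^2)$ is unchanged. The prefactor $\sigma^{-\ell}$ is exactly what produces the $-\ell\ln\sigma$ appearing in the statement.

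All that remains is to merge these two failure terms into the single expression $\gamma_\mu \exp(-\beta_\mu N\max(\sigma^{2+2\ell},\sigma^4)-\ell\ln\sigma)$. I replace each of the two rates $8/(D\alpha_\mu 5^\ell)^2$ and $32/D^2$ by the larger one, $\beta_\mu = D^{-2}\max[8/(\alpha_\mu 5^\ell)^2,32]$, which makes each exponent at most $-\beta_\mu N\max(\sigma^{2+2\ell},\sigma^4)$; I then bound the two prefactors $\alpha_\mu 5^\ell\sigma^{-\ell}$ and $1\le \sigma^{-\ell}$ (in the regime $\sigma\le 1$ where the statement is of interest) by their sum $\gamma_\mu\,\sigma^{-\ell} = (1+\alpha_\mu 5^\ell)\sigma^{-\ell}$, giving the stated probability bound. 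The only genuinely delicate step is this final consolidation, because the two exponents have different powers of $\sigma$ and one must introduce the $\max(\sigma^{2+2\ell},\sigma^4)$ to write a single envelope; once that is done, the rest is direct substitution.
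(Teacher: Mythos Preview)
Your argument follows the paper's proof essentially step for step: apply the Noisy Convergence Corollary to get $\Wass_2(\UnifMeasure_P,\mu)\le 9\sigma$ with the stated probability, plug this into the Witnessed Bound Theorem to obtain the constant $54=6\cdot 9$, and then substitute the Dimension--Complexity bound $\LebNum_\mu(\sigma)\le \alpha_\mu 5^\ell/\sigma^\ell$ to simplify the failure probability. The paper does exactly this, with slightly less commentary.

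One remark on the last consolidation step, which you yourself flag as delicate: your claim that replacing each rate by $\beta_\mu=\max(c_1,c_2)$ and each $\sigma$-power by $\max(\sigma^{2+2\ell},\sigma^4)$ ``makes each exponent at most $-\beta_\mu N\max(\ldots)$'' goes the wrong way. Taking the maximum of the rates and the maximum of the $\sigma$-powers yields the \emph{smallest} of the competing exponentials, not an upper envelope for them, so the inequality you write does not follow. This is not a divergence from the paper, however: the paper performs the identical consolidation and asserts the same final inequality without further justification, so your write-up faithfully reproduces the published argument, wart and all.
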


\begin{proof} %[of Theorem~\ref{th:main} (Approximation)]
Thanks to the Witnessed Bound Theorem and the Noisy Convergence Corollary, %\ref{th:witnessed-approx} \ref{cor:empirical}
the inequality holds with probability at least:
$$1 - \LebNum_\mu(\sigma)) \exp(-8 N
\sigma^2/(D\LebNum_\mu(\sigma))^2) - \exp(-32 N \sigma^4/D^2) $$We
use Lemma \ref{lemma:complexity} to lower bound the covering number
$\LebNum_\mu(\sigma)$ by $\alpha_\mu 5^\ell/\sigma^\ell$. Hence, the
previous expression is bounded from below by
\begin{align*}
1 &- \alpha_\mu 5^\ell \exp(-8 N
\sigma^{2+2\ell}/(D \alpha_\mu 5^\ell)^2 
- \ell \ln(\sigma)) - \exp(-32 N \sigma^4/D^2) \\
&\qquad\geq 1 - \gamma_\mu \exp(-\beta_\mu N \max(\sigma^{2+2\ell}, \sigma^4) - \ell\ln(\sigma)) %\qedhere
\end{align*}
where $\gamma_\mu = 1 + \alpha_\mu 5^\ell$ and 
$\beta_\mu = \frac{1}{D^2} \max\left[\frac{8}{(\alpha_\mu 5^\ell)^2}, 32\right]$, 
as stated in the theorem.
\end{proof}

%======================================================================
\section{Discussion}
\label{sec:discussion}

We illustrate the utility of the bound in the Witnessed Bound
Theorem by example and an inference %~\ref{th:witnessed-approx}
statement. Figure~\ref{fig:figure-8} shows $6000$ points drawn from
the uniform distribution on a sideways figure-8 (in red), convolved
with a Gaussian distribution. The ordinary distance function to the
point set has no hope of recovering geometric information out of these
points since both loops of the figure-8 are filled in.  On the right,
we show the sublevel sets of the distance to the uniform measure on
the point set, both the witnessed $k$-distance and the exact
$k$-distance. Both functions recover the topology of figure-8, the
bits missing from the witnessed $k$-distance smooth out the boundary
of the sublevel set, but do not affect the image at large.

\paragraph{Inference}
Suppose that we are in the conditions of the Approximation Theorem, but
additionally we assume that the support $K$ of the original measure $\mu$ has
a \emph{weak feature size} larger than $R$. This means that the
distance function $\dd_K$ has no critical value in $[0,R]$, and
implies that all the offsets $K^r = \dd_{K}^{-1}[0,r]$ of $K$ are
homotopy equivalent for $r \in (0,R)$. Suppose again that we have
drawn a set $P$ of $N$ points from a Wasserstein approximation $\nu$
of $\mu$, such that $\Wass_2(\mu,\nu)\leq \sigma$. From the Approximation
Theorem, we have 
\begin{equation*}
 \norm{\dw_{P,k} - \dd_K}_\infty \leq e(m_0):= 54 m_0^{-1/2} \sigma +
 24 m_0^{1/\ell}\alpha_\mu^{-1/\ell}
\end{equation*}
with high probability as $N$ goes to infinity.  Then, the standard
argument \cite{cohensteiner2007spd} shows that the Betti numbers of
the compact set $K$ can be inferred from the function $\dw_{P,k}$,
which is defined only from the point sample $P$, as long as $e(m_0)$
is less than $R/4$ (see the Appendix). In the language of
persistent homology \cite{persistence-survey}, the persistent Betti
numbers $\Betti^{(e(m_0),3e(m_0))}$ of the function $\dw_{P,k}$ are
equal to the Betti numbers of the set $K$, $\Betti(K)$.

\begin{figure*}[t]
    \centering
    \includegraphics[width=\textwidth]{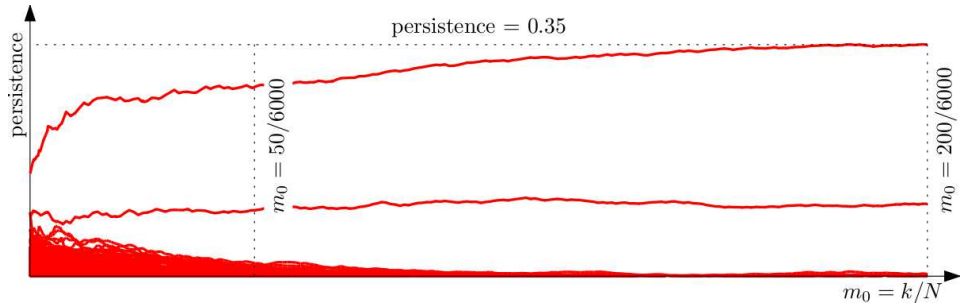}
    \caption{(PL-approximation of the) 1-dimensional persistence vineyard of 
             the witnessed $k$-distance function.
             Topological features of the space, obscured by noise
             for low values of $m_0$, stand out as we increase the mass
             parameter.}
    \label{fig:vineyard}
\end{figure*}

\paragraph{Choice of the mass parameter}
This language also suggests a strategy for choosing a mass parameter
$m_0$ for the distance to a measure, a question that has not been
addressed by the original paper \cite{distance-to-measure}. For every
mass parameter $m_0$, the $p$-dimensional \emph{persistence diagram}
$\Pers_p(\dd_{\mu,m_0})$ is a set of points $\{(b_i(m_0),
d_i(m_0))\}_i$ in the extended plane $(\Rsp \cup\{\infty\})^2$.  Each
of these points represents a homology class of dimension $p$ in the
sublevel sets of $\dd_{\mu,m_0}$; $b_i(m_0)$ and $d_i(m_0)$ are the
values at which it is born and dies. Since the distance to measure
$\dd_{\UnifMeasure_P,m_0}$ depends continuously on $m_0$, by
\cite{cohensteiner2007spd} so do its persistence diagrams. Thus, one
can use the algorithm in \cite{vineyards} to track their
evolution. Figure~\ref{fig:vineyard} illustrates such a construction
for the point set in Figure~\ref{fig:figure-8} and the witnessed
$k$-distance. It displays the evolution of the persistence $(d_1(m_0)
- b_1(m_0))$ of each of the 1-dimensional homology classes as $m_0$
varies, thus highlighting the choices of the mass parameter that lead
to the presence of the two prominent classes (corresponding to the two
loops of the figure-8).

\section*{Acknowledgement}
This work has been partly supported by a grant from the French ANR,
ANR-09-BLAN-0331-01, NSF grants FODAVA 0808515, CCF 1011228, and
NSF/NIH grant 0900700.
%The second author
%would also like to acknowledge the support of the Fields Institute.

%The difference $(d_i - b_i)$ quantifies how long the
%homology class persists, thus giving insight on its meaningfulness.

%

% \paragraph{Leftover from the original discussion}
% \Remark{Picking $k$. Quality of approximation with varying $k$. $k$ is really
% $m_0$.}

% \Remark{Can be viewed as an input preprocessing technique.}

% \Remark{Bounds on the complexity of $k$-distance ? Relation with
%   $k$-set problem ? Converse results ? Optimal approximation ? etc.}

% \Remark{Complexity of $\alpha$-shape ; reconstruction techniques such as
% \cite{boissonnat2010manifold}}
%======================================================================
\bibliographystyle{plain}
\bibliography{witnessed}

\appendix

%\section{Missing proofs}

\subsection*{Recovering Betti numbers}
Letting $K$ be the support of a measure $\mu$, and $P$ a point sample drawn from
a distribution $\nu$ approximating $\mu$, 
we denote with $K^r$ and $P^r$ the sublevel sets $\dd_K(-\infty, r]$ 
and $\dw_{P,k}(-\infty, r]$ of the distance to $K$ and the witnessed
$k$-distance to the uniform measure on $P$, respectively.
With $e(m_0) = 54 m_0^{-1/2} \sigma + 24 m_0^{1/\ell}\alpha_\mu^{-1/\ell}$,
we have the following sequence of inclusions:
$$ K^0 \subseteq P^{e(m_0)} \subseteq K^{2e(m_0)} \subseteq
    P^{3e(m_0)} \subseteq K^{4e(m_0)}. $$ 
Assuming $K$ has a weak feature size $R$, and $e(m_0) < R/4$,
function $\dd_K$ has no critical values in the range
    $(0, R) \supseteq (0, 4e(m_0))$, and therefore the rank of the
    image on the homology induced by inclusion $\Hgr(P^{e(m_0)}) \to
    \Hgr(P^{3e(m_0)})$ is equal to the Betti numbers of the set $K$.

\bigskip
\bigskip

\end{document}